\newfont{\mycrnotice}{ptmr8t at 7pt}
\newfont{\myconfname}{ptmri8t at 7pt}
\def\Title {SEISMIC: A Self-Exciting Point Process Model \\ for Predicting Tweet Popularity}
\newtheorem{theorem}{Theorem}[section]
\newtheorem{proposition}[theorem]{Proposition}
\newcommand{\commentout}[1]{}
\def\p{\hat{p}}
\def\la{\lambda}
\def\<{\langle}
\def\>{\rangle}
\def\E{{\mathbb E}}
\def\P{{\mathbb P}}
\def\F{{\mathcal F}}
\newcommand{\eqn}[1]{\begin{align*}#1\end{align*}}
\newcommand{\hide}[1]{}
\newcommand{\xhdr}[1]{\vspace{1.7mm}\noindent{{\bf #1.}}}
\newcommand{\eg}{\emph{e.g.}}
\newcommand{\ie}{\emph{i.e.}}
\newcommand{\sem}{{\sc{Seismic}}\xspace} 
\begin{document}

\title{\Title}

\numberofauthors{5} 
%
\author{
\alignauthor
Qingyuan Zhao\\
      \affaddr{Stanford University}\\
      \email{qyzhao@stanford.edu}
\alignauthor
Murat A. Erdogdu\\
      \affaddr{Stanford University}\\
      \email{erdogdu@stanford.edu}
\alignauthor
Hera Y. He\\
      \affaddr{Stanford University}\\
      \email{yhe1@stanford.edu}
\and  
\alignauthor
Anand Rajaraman\\
      \affaddr{Stanford University}\\
      \email{anand@cs.stanford.edu}
\alignauthor
Jure Leskovec\\
      \affaddr{Stanford University}\\
      \email{jure@cs.stanford.edu}
}

\maketitle
\begin{abstract}

Social networking websites 
allow users to create and share content. Big information
cascades of post resharing can form as users of these sites reshare
others' posts with their friends and followers. 
One of the central challenges in understanding such cascading
behaviors is in forecasting information outbreaks, where a single post becomes widely popular by being reshared by many users.

In this paper, we focus on predicting the final number of reshares of
a given post. We build on the theory of self-exciting point processes
to develop a statistical model that allows us to make accurate predictions.
Our model requires no training or expensive feature engineering. It
results in a simple and efficiently computable formula that allows us
to answer questions, in real-time, such as:
Given a post's resharing history so far, what is our current estimate
of its final number of reshares? Is the post resharing cascade past the initial stage of explosive growth?
And, which posts will be the most reshared in the future?

We validate our model using one month of complete Twitter data and
demonstrate a strong improvement in predictive accuracy over existing
approaches. Our model gives only 15\% relative error in predicting
final size of an average information cascade after observing it for just one hour.

\hide{
We consider the problem of forecasting information outbreaks on social
networks. Many social networks allow their users to share others' post
with their own friends or followers, which may form many large
information cascades over the network. In this paper, we propose a
self-exciting statistical model to analyze the behavior of such information
cascades. Our model can be used in real time
to: 1. answer when the cascade is large enough to be predictable;
2. classify the cascade into breakout or non-breakout; 3. predict its
final influence. We use a large Twitter data set to validate the
effectiveness of our model, which indeed shows a clear improvement over
several benchmarks.
} 
\end{abstract}

\vspace{2mm}
\noindent {\bf Categories and Subject Descriptors:} H.2.8 {\bf
[Database Management]}: Database applications---{\it Data mining}

\noindent {\bf General Terms:} Algorithms; Experimentation.

\noindent {\bf Keywords:} information diffusion; cascade prediction;
self-exciting point process; contagion; social media.

\section{Introduction}
\label{sec:introduction}
Online social networking services, such as Facebook, Youtube, and Twitter, allow their users to post and share content in the form of posts, images, and videos~\cite{dow2013anatomy,hong2011predicting,nowell08letter,suh2010want}. As a user is exposed to posts of others she follows, the user may in turn reshare a post with her own followers, who may further reshare it with their respective sets of followers. This way large information cascades of post resharing spread through the network.

A fundamental question in modeling information cascades is to predict their future evolution.
Arguably the most direct way to formulate this question is to consider predicting the final size of an information cascade. That is, to predict how many reshares a given post will ultimately receive.

Predicting \hide{the attention or} the ultimate popularity of a post is
important for content ranking and aggregation. For instance, Twitter
is overflowing with posts and users have a hard time keeping up with
all of them. Thus, much of the content gets missed and eventually
lost. Accurate prediction would allow Twitter to rank content better,
discover trending posts faster, and improve 
its content-delivery networks.
Moreover, predicting information cascades allows us to gain fundamental insights into predictability of collective behaviors where uncoordinated actions of many individuals lead to spontaneous outcomes, for example, large information outbreaks.

Most research on predicting information cascades involves extracting
an exhaustive set of features describing the past evolution of a
cascade and then using these features in a simple machine learning
classifier to make a prediction about future growth~\cite{bandari2012pulse,cheng2014can,hong2011predicting,kupavski2012cikm,petrovic2011rt,suh2010want}. However,
feature extraction can be expensive and cumbersome, and one is never
sure if more effective features could be extracted. The question
remains how to design a simple and principled
bottom-up model of cascading behavior. The challenge lies in defining
a model for an individual's behavior and then aggregating the effects
of the individuals in order to make an accurate global prediction.

\hide{
Emerging online social networks, like Facebook, Youtube and Twitter, have greatly facilitated the spread of user-generated content, thanks to their functionality of ``retweet'' or
``share''~\cite{dow2013anatomy,hong2011predicting,nowell08letter,suh2010want}. On
these networks, big or even explosive information cascades can
occur, e.g.\ the tweet described in Figure~\ref{fig:breakout}.
Accurate prediction of the ultimate popularity of such contagious user-generated
content is key to applications such as content ranking, trend
forecasting, and understanding the collective human behavior~\cite{bandari2012pulse,cheng2014can,Kwak2010,petrovic2011rt,suh2010want,Szabo2010}.
In particular, our motivating application is to identify breakout
tweets long before they go viral, which can be reformulated into a more general problem: How to accurately
predict the final popularity of an individual tweet, preferably in
an online fashion?

On Twitter, a tweet spreads mainly through the Twitter followers
network, i.e.\ a user is susceptible to the tweet only if someone he/she
follows has posted or retweeted this tweet before. Since tweets with hashtags
may have different diffusion mechanisms, 
in this paper we only focus on tweets
without hashtags. In its nature, the problem of predicting single tweet is
different from another widely
studied class of problems where information flows freely on the
network, thus all nodes are susceptible to the information of interest
regardless of the network structure. Examples of the latter problems are
hashtag popularity on Twitter~\cite{Matsubara2012rise}, clicks on Youtube videos~\cite{Crane2008}, evolvement of
citation networks~\cite{hunter2011dynamic}, frequencies of memes mentioned on
websites~\cite{rodriguez2013modeling, yang2013mixture,
  zhou2013learning} etc. A more thorough discussion of these work can
be found in Section \ref{sec:related-work}.

In contrast to the problems mentioned above, the information diffusion
mechanism of an individual tweet has the following distinctive characteristics:
\begin{enumerate}[nolistsep]
\item Tweets have the ``rich-get-richer'' property. The more retweets a
  tweet gets, the more attention it will draw and thus will be more
  likely to be retweeted again.
\item The dynamics of a breakout
  information cascade (\eg, Figure~\ref{fig:breakout}) are extremely
  bursting and unstable, hence imposing parametric assumptions on the
  time series can limit the applicability.
\item Twitter's network is extremely inhomogeneous and thus
  network information, in particular the node degrees (number of
  followers), can play in an important role in a successful model.
\item The sheer volume of tweets being generated every second calls
  for an algorithm efficient in both computing time and storage.
\end{enumerate}

} 

\begin{figure}[!t]
  \centering
  \includegraphics[width = \columnwidth]{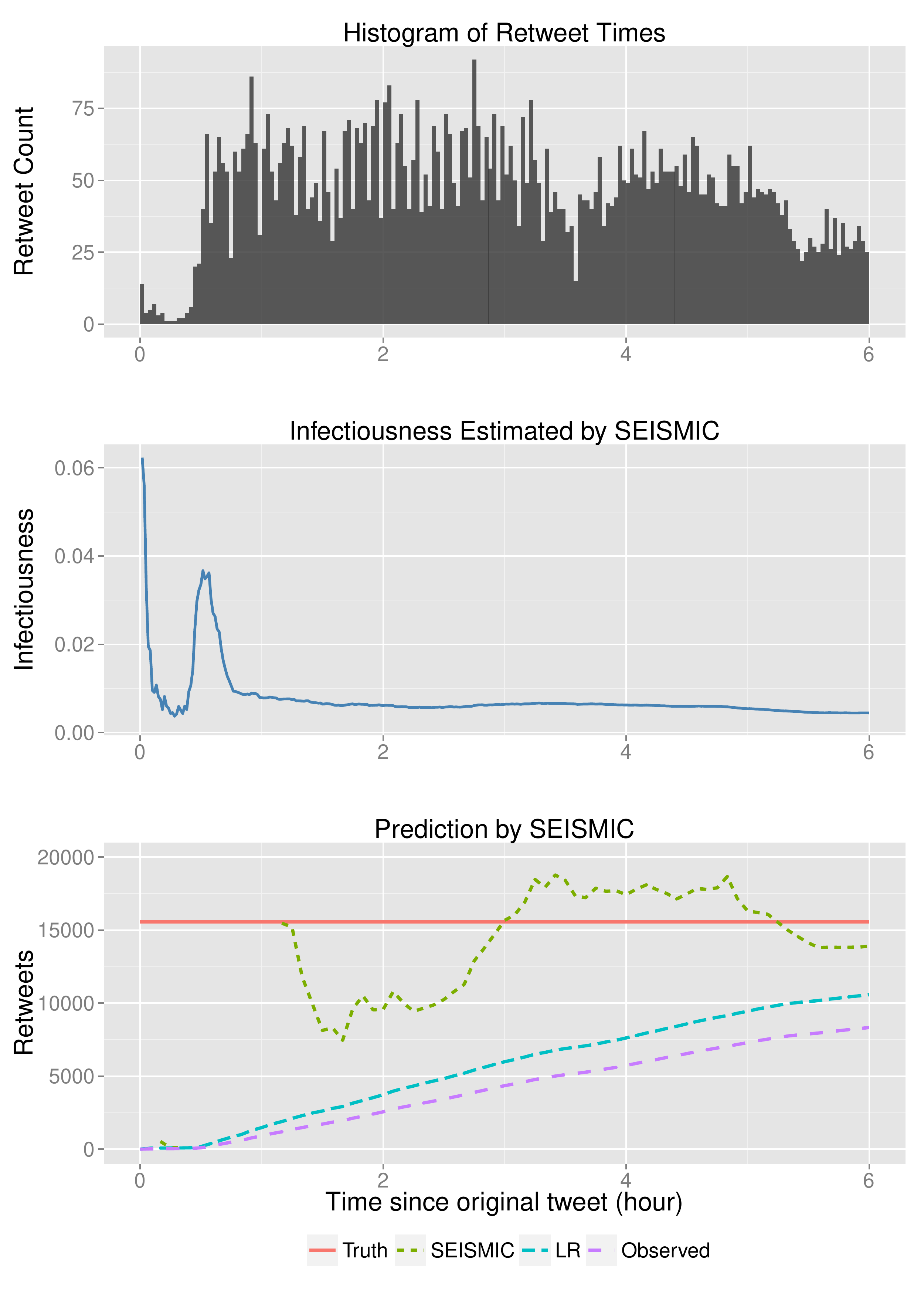}
  \caption{First 6 hours of retweeting activity of a popular
    tweet~\cite{gaddafi-bieber} (top). The controversial tweet is
    about the fresh death of dictator Muammar Gaddafi and mentions singer Justin Bieber.
  Interestingly, the car manufacturer Chevrolet Twitter account
  inappropriately retweeted the tweet about 30 minutes after the original
  tweet, which possibly lead to tweet's   sustained popularity.
  Tweet infectiousness against time as estimated by \sem (middle).
  Predictions of the tweet's final retweet count (denoted as ``Truth'') as a function of time
  (bottom). We compare \sem with time series linear regression (LR), ``Observed'' plots the cumulative number of observed retweets by a given time. Notice \sem quickly finds an
  accurate estimate of the tweet's final retweet count.
  }
  \label{fig:breakout}
\end{figure}

\xhdr{Present work}
Here we focus on predicting the final size of an information cascade
spreading through a network. We develop a statistical model based on the theory of {\em self-exciting point
  processes}. A point process indexed by time is called a {\em counting process}
when it counts the number of instances (reshares, in our case) over time. In contrast to homogeneous
Poisson processes which assume constant intensity over time, self-exciting processes assume that all the previous
instances (\ie, reshares) influence the future evolution of the
process. Self-exciting point processes are frequently used to model ``rich get richer'' phenomena~\cite{Matsubara2012rise,mohler2011,yang2013mixture,zhou2013learning}. They are ideal for modeling information cascades in networks because every new reshare of a post not only increases its cumulative reshare count by one, but also exposes new followers who may further reshare the post.


We develop \sem (\emph{Self-Exciting Model of Information Cascades})
for predicting the total number of reshares of a given post. In our
model, each post is fully characterized by its {\em infectiousness} which measures the reshare probability. We allow the infectiousness
to vary freely over time in agreement with the observation that the
infectiousness can drop as the content gets stale (see~Figure~\ref{fig:breakout}). Moreover, our model is able to identify at
each time point whether the cascade is in the \emph{supercritical} or
\emph{subcritical} state, based on whether its infectiousness is above
or below a critical threshold. A cascade in the supercritical state is
going through an ``explosion'' period and its final size cannot be
predicted accurately at the current time. On the contrary, a cascade
is tractable if it is in subcritical state. In this case, we are able to predict its
ultimate popularity accurately by modeling the future cascading behavior by a
Galton-Watson tree.


Our \sem approach makes several contributions:
\begin{itemize}[nolistsep]
\item {\bf Generative model:} \sem imposes no parametric
  assumptions and requires no expensive feature engineering.
  Moreover, as complete social network structure may be hard to obtain,
  \sem assumes minimal knowledge of the network:
  The only required input is the time history of reshares and the degrees of the
  resharing nodes.
\item {\bf Scalable computation:} 
  Making a prediction using \sem only requires computational time linear in the
  number of observed reshares. Since predictions for individual posts
  can be made independently, our algorithm can also be easily
  parallelized. 
\item {\bf Ease of interpretation:} For an individual cascade, the model synthesizes all its past
  history into a single infectiousness parameter. This infectiousness
  parameter holds a clear meaning, and can serve as input to other applications.
\end{itemize}

We evaluate \sem on one month of complete Twitter data, where users post tweets which others can then reshare by retweeting them. 
We demonstrate that \sem is able to predict the final retweet count of
a given tweet with 30\% better accuracy than the state-of-the-art
approaches (\eg, \cite{gao2015modeling}). For reasonably popular tweets, our model achieves 15\%
relative error in predicting the final retweet count after observing
the tweet for 1 hour, and 25\% error after observing the tweet for just 10
minutes. Moreover, we also demonstrate how \sem is able to
identify tweets that will go ``viral'' and be among the most popular
tweets in the future.
By maintaining a dynamic list of 500 tweets over time, we are able to
identify 78 of the 100 most reshared tweets and 281 of the 500
most reshared tweets in just 10 minutes after they are posted.

The rest of the paper is organized as follows: Section \ref{sec:related-work} surveys the related work. Section~\ref{sec::genFrame} describes \sem, and Section~\ref{sec::theory} shows how the model can be used to predict the final size of an information cascade. We evaluate our method and compare its performance with a number of baselines as well as state-of-the-art approaches in Section~\ref{sec:experiments}. Last, in Section \ref{sec:discuss}, we conclude and discuss future research directions.



\section{Related Work}
\label{sec:related-work}

The study of information cascades is a rich and active
field~\cite{rogers95diffusion}. Recent models for predicting size
of information cascades are generally characterized by two types of approaches, feature based methods and point process based methods.

Feature based methods first extract an exhaustive list of potentially relevant features, including content features, original poster features, network structural features, and temporal features \cite{cheng2014can}.
Then different learning algorithms are applied, such as simple
regression models~\cite{Agarwal2009,cheng2014can}, probabilistic
collaborative filtering~\cite{zaman2010predicting}, regression
trees~\cite{bakshy2011everyone}, content-based models~\cite{naveed2011bad}, and passive-aggressive algorithms~\cite{petrovic2011rt}.
There are several issues with such approaches: laborious feature engineering and extensive
training are crucial for their success, and the performance is highly
sensitive to the quality of the
features~\cite{bandari2012pulse,suh2010want}. Such approaches also
have limited applicability because they cannot be used in real-time
online settings---given the massive amount of posts being produced every second, it is practically impossible to extract all the necessary features for every post and then apply complicated prediction rules.
In contrast, \sem requires no feature engineering and results in an
efficiently computable formula that allows it to predict the final
popularity of millions of posts as they are spreading through the
network.

The second type of approach is based on point processes, which directly models the formation of an information cascade in a network.
Such models were mostly developed for the complementary problem of
network inference, where one observes a number of information cascades
and tries to infer the structure of the underlying network over which
the cascades propagated~\cite{daneshmand14netrate, du2012learning, manuel13pathways,rodriguez2013modeling, manuel13dynamic, hunter2011dynamic, yang2013mixture,zhou2013learning}.
These methods have been successfully applied to study the spread of memes on
the web~\cite{du2012learning,rodriguez2013modeling,yang2011,yang2013mixture} as well as hashtags on Twitter~\cite{zhou2013learning}. In contrast, our goal is not to infer the network but to predict the ultimate size of a cascade in an observed network.

A major distinction between our model and existing methods based on
Hawkes processes ({\it e.g.}, \cite{Matsubara2012rise,mohler2011,yang2013mixture,Zaman2014,zhou2013learning}) is that we assume the process intensity $\lambda_t$ depends on another
stochastic process $p_t$, the post infectiousness. In other words, we
allow the infectiousness to change over time. Moreover, some of these
methods \cite{Zaman2014} rely on computationally expensive Bayesian inference, while our method has linear time complexity.
Another recently proposed related work is \cite{gao2015modeling},
which also takes the point process approach and directly aims to
predict tweet popularity. However, their method makes restrictive
parametric assumptions and does not consider the network structure,
which limits its predictive ability. We compare \sem with
\cite{gao2015modeling} in Section~\ref{sec:experiments} and demonstrate a 30\% improvement.

\hide{
, due to the following reasons:
\begin{itemize}[noitemsep]
\item Theoretically, \cite{Sornette2003} derive all possible
shapes of the burst if the memory kernel is power-law. Empirical
studies such as \cite{Yang2011} find many clusters of information
cascades that cannot be well explained by this model.
\item The stochastic process is assumed to be stationary in their
  model, which is very unlikely for breakout tweets. For example, Figure
  \ref{fig:breakout} shows the time evolution of a breakout
  tweet \cite{gaddafi-bieber}\footnote{\footnotesize The
  tweet is related to the death of Muammar Gaddafi and teenage
  singer Justin Bieber. It became very controversial after the official
  Twitter account of the car manufacturer company Chevrolet retweeted but
  quickly deleted the retweet later, which happens about an hour after
  the original tweet had been posted.}. This
  demonstrates that the dynamics of a breakout information cascade can
  be extremely bursting and unstable.


\item \cite{Crane2008} assumes the burst is the result of either an
  endogenous or an exogenous shock. For a sharing process such as
  the retweet process considered in this paper, the external sources
  outside Twitter contribute very little to the cascade most of the time.
\end{itemize}

} 

\section{\!\!\! Modeling Information Cascades}
\label{sec::genFrame}

In this section, 
we describe \sem and discuss how it can be used for:

\begin{enumerate}[nolistsep]
  \item Estimating the spreading rate of a given information cascade, which we quantify by the post's infectiousness.
  \item Determining whether the cascade is in supercritical (explosive) or subcritical (dying out) state.
  \item Predicting the final size of an information cascade, which is measured by the ultimate number of reshares received by the post that started the cascade.
\end{enumerate}

Important quantities in our model are the total number of
reshares $R_t$ of a given post up to time $t$ and the
cascade speed of spreading $\lambda_t$. In our model, $\lambda_t$ is determined by the post
infectiousness $p_t$ and human reaction time. Our goal is to predict $R_\infty$, the final number of reshares.

Another important quantity in our model is the {\em memory kernel}
$\phi(s)$, which quantifies the delay between a post arriving to a user's feed and the user resharing it. Intuitively, infectiousness defines the probability that a given user will reshare a given post, and the memory kernel models user's reaction time. By combining the two we can then accurately model the speed at which the post will spread through the network.
Table~\ref{tab:notations} summarizes the notation.

\begin{table}[!t]
  \centering
  \begin{tabular}{l | l}
    Symbol & Description \\
    \hline
    $w$ & Post/information cascade \\
    $p_t$ & Infectiousness of $w$ at time $t$ (Section \ref{sec:spread-speed}) \\
    $\phi(s)$ & Memory kernel (Section \ref{sec:human-reaction-time}) \\
    $i$ & Node that contributed $i^\textit{th}$ reshare. \\
         &$i=0$ corresponds to the originator of the post. \\
    $t_i$ & Time of the $i^\textit{th}$ reshare relative to the original
    post.\\
    $n_i$ & Out-Degree of the $i^\textit{th}$ node \\
    $R_t$ & Cumulative popularity by time $t$: $|\{i > 0; t_i \le t\}| $ \\
    $R_{\infty}$ & Final popularity (final number of reshares): $|\{i
    > 0\}|$ \\
    $N_t$ & Cumulative degree of resharers by time $t$: $\sum_{i: t_i\le t} n_i$ \\
    $N_t^e$ & Effective cumulative degree of resharers by time $t$:\\
            & $N_t^e = \sum_{i=0}^{R_t}  n_i \int_{t_{i}}^{t} \phi(s-t_i)ds $\\
    $\lambda_t$ & Intensity of cumulative popularity $R_t$ \\
    $\hat{p}_t$ & Model's estimate of infectiousness $p_t$ at time $t$ \\
    $\hat{R}_{\infty}(t)$ & Model's estimate at time $t$ of final popularity $R_{\infty}$ \\
  \end{tabular}
  \caption{Table of symbols.}
  \label{tab:notations}
\end{table}


\subsection{Human reaction time}
\label{sec:human-reaction-time}
In order to predict the cascade size, we need to know how long it takes for a person to reshare a post. Knowing the delay allows us to accurately model the speed of a cascade spreading through the network. We consider that the time $s$ between the arrival of a post in a users'
timeline and a reshare of the post by the user is distributed with
density $\phi(s)$. The probability density $\phi(s)$ is also called a memory kernel because it measures a physical/social system's memory of stimuli~\cite{Crane2008}.

The distribution of human response time $\phi(s)$ has been shown to be heavy-tailed in social networks~\cite{Barabasi2005}. Usually
the tail of $\phi(s)$ is assumed to follow a power-law with exponent
between $1$ and $2$ or a log-normal distribution~\cite{Crane2008,Zaman2014}.
However, due to the rapid nature of information sharing on Twitter, it
is also natural to expect many instant reaction times. In fact, our exploratory data analysis in Section
\ref{sec:parameter-estimation} confirms that in Twitter, $\phi(s)$ is
approximately constant for the first 5 minutes and then followed by a
power-law decay. Different social networks may have different
distributions of human reaction times. However, $\phi(s)$ only needs
to be estimated once per network and thus we can safely
assume it is given. We describe a detailed estimation procedure of $\phi(s)$ in Section~\ref{sec:parameter-estimation}.

\subsection{Post infectiousness}
\label{sec:spread-speed}

The second component of our model is the post infectiousness. We assume each
post $w$ is associated with a time dependent, intrinsic
infectiousness parameter $p_t(w)$. In other words, $p_t(w)$ models how
likely the post $w$ is to be reshared at time $t$.
Infectiousness of a post may depend on a combination of factors,
including but not limited to the quality of the post's content, the social network
structure, the current local time, and the geographical location. Instead of assuming a parametric form of $p_t$, we model it
flexibly in a nonparametric way which implicitly accounts for all
these factors. 

Most existing methods studying self-exciting point processes assume
$p_t$ to be fixed over time. Consequently, an important concept is the
{\em criticality} of the process $R_t$. In a self-exciting point process
with constant infectiousness $p_t \equiv p$, there exists a phase transition
phenomenon at certain critical threshold $p^{*}$ such that~\cite{durrett2010probability}:
\begin{enumerate} [nolistsep]
  \item If $p > p^{*}$, then $R_t \to \infty$ as $t \to \infty$ almost surely
  and exponentially fast. This is called the \emph{supercritical} regime.
  \item If $p < p^{*}$, then $\sup_t R_t < \infty$ almost surely. This is
  called the \emph{subcritical} regime.
\end{enumerate}

In reality, $R_t$ is always bounded due to the finite size of the network.
Thus, no supercritical
cascades can exist if $p_t$ is assumed to be a constant. This is
inadequate to model highly contagious tweets and our assumption of
non-constant infectiousness solve this problem as well. Furthermore, as the post gets older the
information becomes outdated and its spreading power (infectiousness) may decrease.
This effect may also be observed as the post spreads
further away from the original poster~\cite{Zaman2014}. Alternatively,
resharing by a highly influential user may increase the
post's infectiousness. Thus, rather than assuming a common evolutionary pattern of $p_t$ for all the tweets, we only assume it
varies smoothly over time and use non-parametric methods to estimate
$p_t$ for each tweet.

\subsection{The SEISMIC model}
\label{sec:self-exiciting-model}

We combine human reaction times and post
infectiousness to derive \sem. In order to link $p_t$ to the post resharing
process $R_t$, we model $R_t$ as a \emph{doubly stochastic
  self-exciting point process}. This is an extension to the standard self-exciting
point process (also called the Hawkes process~\cite{hawkes1971}) which
was initially used to model earthquakes~\cite{ogata1988}.

We first define the intensity $\la_t$ of $R_t$, which simply measures the rate of obtaining an additional reshare at time $t$. More formally:
\begin{equation*}
  \label{eq:1}
\la_t =\lim_{\Delta \downarrow 0} \frac{\P\left ( R_{t+\Delta } - R_t
    = 1 \right )}{\Delta }.
\end{equation*}

In \sem, the intensity $\la_t$ at time $t$ is determined by
infectiousness $p_t$, reshare times $t_i$, node degrees  $n_i$, and
human reaction time distribution $\phi(s)$. The exact relationship
described in Eq.~\eqref{eq:model} is inspired by the theory of Hawkes processes~\cite{hawkes1971}:
\begin{equation}
  \label{eq:model}
  \begin{aligned}
  \lambda_t &= p_t \cdot \sum_{t_i \le t,~i \ge 0} n_i \phi(t-t_i), \quad t \ge t_0.
  \end{aligned}
\end{equation}

Intuitively, $\sum_{t_i \le t,~i \ge 0} n_i \phi(t-t_i)$ is the
intensity of the arrival of newly exposed users at time $t$ and its
product with the resharing probability $p_t$ gives the
intensity of reshares at time $t$.

Note that the above point process is called \emph{self-exciting}
because each previous observation $i$ such that $t_i \le t$
contributes to the intensity $\lambda_t$, or equivalently, each
observation increases the intensity in the future. It is further
\emph{doubly stochastic} (or a \emph{Cox process}) because the infectiousness $p_t$ is itself a stochastic process.

Additionally, we assume node degrees $\{n_i\}$ are independent and
identically distributed with mean $n^{*}$. Mean degree $n^{*}$ is
related to the critical threshold $p^{*}$ which is already discussed in Section \ref{sec:spread-speed}. The critical infectiousness threshold takes value $p^{*} = 1 / n^{*}$. We give the proof of this fact in Proposition~\ref{prop:expectation}.



\section{\!\!\!\!\! Predicting Information Cascades}
\label{sec::theory}


In this section we describe how to perform statistical inference for the self-exciting
model of information cascades introduced in the previous section.
Specifically, we discuss how \sem estimates the
infectiousness parameter $p_t$ and then predicts the ultimate size of
the cascade $R_{\infty}$.

Throughout this section, we make a technical assumption that
the followers of all the resharers are disjoint, so we can use a tree
structure to describe the information diffusion (Figure
\ref{fig:descendants}).
The conclusions made in this section remain valid even if resharers are not disjoint.
In this case, we can replace the node degree $n_i$ with the total
number of newly exposed neighbors of node $i$ (the followers of
$i$-th resharer who do not follow the first $i-1$ resharers).

\subsection{Estimating post infectiousness}
\label{sec:estimation-infectiousness}

We first define the {\em sample-function density}, which plays a central
role in estimating self-exciting point processes~\cite{snyder2011}.
Let's denote $\mathcal{F}_t = \sigma \left( \{ (n_i, t_i)
  \}_{i=0}^{R_t} \right)$ as the $\sigma$-algebra generated by all the
information available by time $t$: the times $t_i$ of all the reshares up to time $t$
and the number of followers (\ie, node degree) $n_i$ of the $i$-th
user to reshare.
Sample-function density is defined
as the joint probability of the number of reshares in the time interval
$[t_0, t)$ and the density of their occurrence times.

To motivate our estimator of $p_t$, we first consider the case where the
infectiousness parameter remains constant over time, \ie, $p_t \equiv p$.
Later we will relax this assumption and allow $p_t$ to vary over time.

In \sem, 
the sample-function density can be expressed using the intensity
$\lambda_t$ as \cite[Thm. 6.2.2]{snyder2011}
\begin{equation}
  \label{eq:likelihood}
  \begin{aligned}
  \P(R_t = r, t_1, \ldots, t_r) =
\prod_{i=1}^{R_t} \lambda_{t_i} \cdot \exp \left \{
  - \int_{t_0}^{t} \lambda_s ds \right \}. \\
    \end{aligned}
\end{equation}
By taking derivative of the log of Eq.~\eqref{eq:likelihood} and combining it with Eq.~\eqref{eq:model}, we obtain the maximum likelihood estimate (MLE) of $p_t$:
\begin{equation}
  \label{eq:mle}
  \p_t \ =\ \frac{R_t}{\sum_{i=0}^{R_t} n_i \int_{t_{i}}^{t} \phi(s-t_i)
     ds }
\end{equation}

The above equation forms the basis of \sem as it allows us to
estimate the infectiousness $\p_t$ at any given time $t$. Moreover, a
confidence interval of $p_t$ can also be obtained~\cite{snyder2011}.


The denominator in Eq. \eqref{eq:mle}, denoted as $N_t^e$ hereafter,
can be interpreted as the accumulative
``effective'' number of exposed users to the post. The numerator $R_t$ is the current number
of reshares of the post. To shed more light on our
estimator, we take $t \rightarrow \infty$, which leads to:
\begin{equation}
\label{eq:p-infty}
\p(\infty) = \frac{1}{\frac{1}{R_{\infty}}\sum_{j=0}^{R_{\infty}}
  n_j } \approx \frac{1}{n_{*}}
\end{equation}
Thus, by assuming the infectiousness $p_t$ to be a constant over
time, one would essentially assume that most posts have the same
infectiousness $1/n_{*}$. However, such assumption is unrealistic as it
cannot explain the bursty and volatile dynamics information cascades (\eg, Figure \ref{fig:breakout}). 

This undesirable consequence of assuming constant $p_t$ is another
motivation for allowing $p_t$ to vary over time. To estimate $p_t$ in this case, we smooth the MLE in Eq. \eqref{eq:mle} by only
using observations close to time $t$ to estimate $p_t$. In particular,
we rely on a sequence of one-sided kernels $K_t(s)$, $s >
0$, indexed by time $t$. We use these kernels to weight the reshares
and the weighted estimate of $p_t$ is given by
\begin{equation}
  \label{eq:mle-kernel}
\begin{aligned}
  \p_t &= \frac{\int_{t_0}^t K_t(t - s) d R_s}{\int_{t_0}^t K_t(t - s)
    d N^e_s } \\
&= \frac{\sum_{i=1}^{R_t} K_t(t - t_i)}{\sum_{i=0}^{R_t}
    n_i\int_{t_i}^t K_t(t - s) \phi(s - t_i) ds}\, .
\end{aligned}
\end{equation}

Notice that when $K_t(s) \equiv 1$ the estimator reduces to the MLE we derived in Eq.~\eqref{eq:mle}. In \sem we use a triangular kernel with growing window size $t/2$ as weighting kernel $K_t(s)$:
\begin{equation}
  \label{eq:weighting-kernel}
  K_t(s) = \max \left\{1 - \frac{2s}{t}, 0\right\}, \ \ \ \ s >0.
\end{equation}

We chose the triangular kernel because it has properties important for our application.
First, the kernel discards all posts
that are older than $t/2$. In particular, it quickly discards the
unstable and potentially explosive period at the beginning, which if included, would
introduce an upward bias to $p_t$.
Second, the kernel takes into account posts in
a larger window size as time $t$ increases. According to
our experiments, the growing window size helps to stabilize $\hat{p}(t)$
compared to a fixed window size.
Third, for reshares
within the window, the kernel up-weights the most recent posts and
gradually down-weights older posts. This keeps our estimator
$\hat{p}(t)$ closer to the ever-changing true $p_t$.
And last, as $K_t(s)$ is piece-wise linear, the integral $\int K_t(t
  - s)\phi(s - t_i) ds$ has a closed form for
many different functions $\phi(s)$ including the one we use for
Twitter in our experiments, see Section \ref{sec:experiments}.

\subsection{Predicting final popularity}
\label{sec:predict-influence}

Having described the procedure for inferring the post infectiousness,
we now need to account for the network structure in order to predict
how far the post is going to spread across the network.

For simplicity, let us assume the post is first posted at time $0$,
\ie, $t_0 = 0$. Consider we have observed the post for $t$ time units
and our goal now is to predict the post's final reshare count, $R_\infty$, based on the information we have observed so far, $\mathcal{F}_t$.

The following proposition shows how to compute the expected final
reshare count of a post. The main idea is to model an information
cascade spreading over the network with a branching process that
counts the reshare number of a post, as illustrated in Figure \ref{fig:descendants}.
Predictor for $R_\infty$ used by \sem can be stated as follows:

\begin{proposition}\label{prop:expectation}
Assume the (out-)degrees in the network are i.i.d. with expectation
$n_*$ and the infectiousness parameter $p_s$ is a constant $p$ for $s
\ge t$. Then, we have
\begin{equation}
\E[R_{\infty}|
\ \mathcal{F}_t]
=\left\{
  \begin{aligned}
&R_t + \frac{p (N_t - N^e_t)}{1 - p n_{*}},\!\!&\mathrm{if~} p <
\frac{1}{n_{*}}, \\
&\infty, ~&\mathrm{if~} p \geq \frac{1}{n_{*}}.
  \end{aligned} \right.
\label{eq:predict}
\end{equation}
\end{proposition}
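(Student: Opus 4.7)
My plan is to decompose $R_\infty - R_t$ into contributions from a Galton–Watson branching tree that descends from the reshares already observed by time $t$, and then take expectations in two layers: first the immediate post-$t$ offspring of the currently observed nodes, then the entire subtree each such offspring spawns.

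First, by the tree assumption stated before the proposition, each reshare in $[t,\infty)$ has a unique ``parent'' which is either (a) one of the $R_t+1$ nodes already present at time $t$ (including the original poster $i=0$), or (b) a later reshare. Group the future reshares by the most recent ancestor that was present at time $t$; this partitions $R_\infty - R_t$ into a sum of independent subtree sizes, one for each \emph{first-generation} post-$t$ child of an observed node. Each such subtree, conditional on its root existing, is a Galton–Watson tree in which every node has i.i.d. out-degree with mean $n_*$ and each exposed follower reshares independently with probability $p$; hence the offspring distribution has mean $pn_*$. By the standard formula for expected total progeny (\cite{durrett2010probability} or a direct geometric-series argument), the expected size of each such subtree (including its root) equals $1/(1-pn_*)$ when $pn_*<1$, and is $+\infty$ when $pn_*\ge 1$.

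Next, I compute the expected number of first-generation post-$t$ children of the observed nodes. Using Eq.~\eqref{eq:model} with $p_s\equiv p$ for $s\ge t$, the conditional intensity of reshares arising from observed node $i$ in $[t,\infty)$ is $p\, n_i \phi(s-t_i)$. Integrating from $t$ to $\infty$ and summing over $i=0,\ldots,R_t$, and using $\int_0^\infty \phi(u)\,du=1$ together with the definitions
\begin{equation*}
N_t \;=\; \sum_{i=0}^{R_t} n_i, \qquad N^e_t \;=\; \sum_{i=0}^{R_t} n_i \int_0^{t-t_i}\!\!\phi(u)\,du,
\end{equation*}
gives expected first-generation offspring count
\begin{equation*}
\sum_{i=0}^{R_t} p\, n_i\!\left(1-\int_0^{t-t_i}\!\!\phi(u)\,du\right) \;=\; p\,(N_t - N^e_t).
\end{equation*}

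Finally I combine the two layers via Wald's identity (or the tower property): the expected total number of future reshares equals (expected number of first-generation future children) times (expected subtree size per child). In the subcritical case $p<1/n_*$ this yields $p(N_t-N^e_t)/(1-pn_*)$, which added to the already realized $R_t$ gives Eq.~\eqref{eq:predict}. In the supercritical case $p\ge 1/n_*$, as long as $N_t-N^e_t>0$ (which holds almost surely because at least the root contributes $n_0(1-\int_0^t\phi)$), at least one first-generation subtree has infinite expected size, so $\E[R_\infty\mid\mathcal{F}_t]=\infty$. The main obstacle I expect is formally justifying the branching-process reduction: I need to argue that, conditional on $\mathcal{F}_t$ and on a future child being born at time $s>t$, the process of its descendants is distributionally independent of the ``sibling'' subtrees and has the same law as a fresh Galton–Watson tree with offspring mean $pn_*$. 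This follows from the i.i.d. assumption on the degrees, the constancy of $p_s$ after $t$, and the fact that $\phi$ integrates to $1$ so the absolute birth time of each subtree root does not affect its expected total progeny.
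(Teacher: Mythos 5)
Your proposal is correct and follows essentially the same route as the paper: both reduce the future of the cascade to a Galton--Watson process with offspring mean $\mu = p n_*$ seeded by the first-generation post-$t$ children, identify the expected seed count as $p(N_t - N^e_t)$, and multiply by the geometric factor $1/(1-\mu)$ (you organize this as subtree sizes plus Wald/tower, the paper as generation sums via the martingale $Z_k/\mu^k$, but these are the same computation). If anything, your derivation is slightly more explicit than the paper's, which asserts $\E[Z_1]=p(N_t-N^e_t)$ ``by definition'' where you actually integrate the residual intensity $p\,n_i\phi(s-t_i)$ over $[t,\infty)$.
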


\begin{proof}
First, we consider the case where $p < 1/n_*$.
We define a sequence of random variables
$\{Z_{1},Z_{2},Z_{3},\ldots\}$ that models the future information
diffusion tree, as illustrated in Figure \ref{fig:descendants}. In
this tree, $Z_k$ denotes the number of reshares made by the $k^\textit{th}$ generation descendants (counting from generation $R_t$ onward). 
Thus, the 1$^\textrm{st}$ generation descendants $Z_1$ refers to the
number of new reshares generated by the posts created before time $t$,
the 2$^\textrm{nd}$ generation descendants $Z_2$ refers to the reshares of the posts of the 1$^\textrm{st}$ descendants, and so on.
Notice that the summation over the $Z_k$'s gives the post's final
reshare count $R_{\infty} = R_t + \sum_{k=1}^{\infty} Z_k$. In the
following we use descendants $Z_k$ only for deriving
Eq. \eqref{eq:predict} and emphasize that our final estimator does not
require explicit network structure information.

\begin{figure}[t]
\centering
\includegraphics[width=0.9\columnwidth]{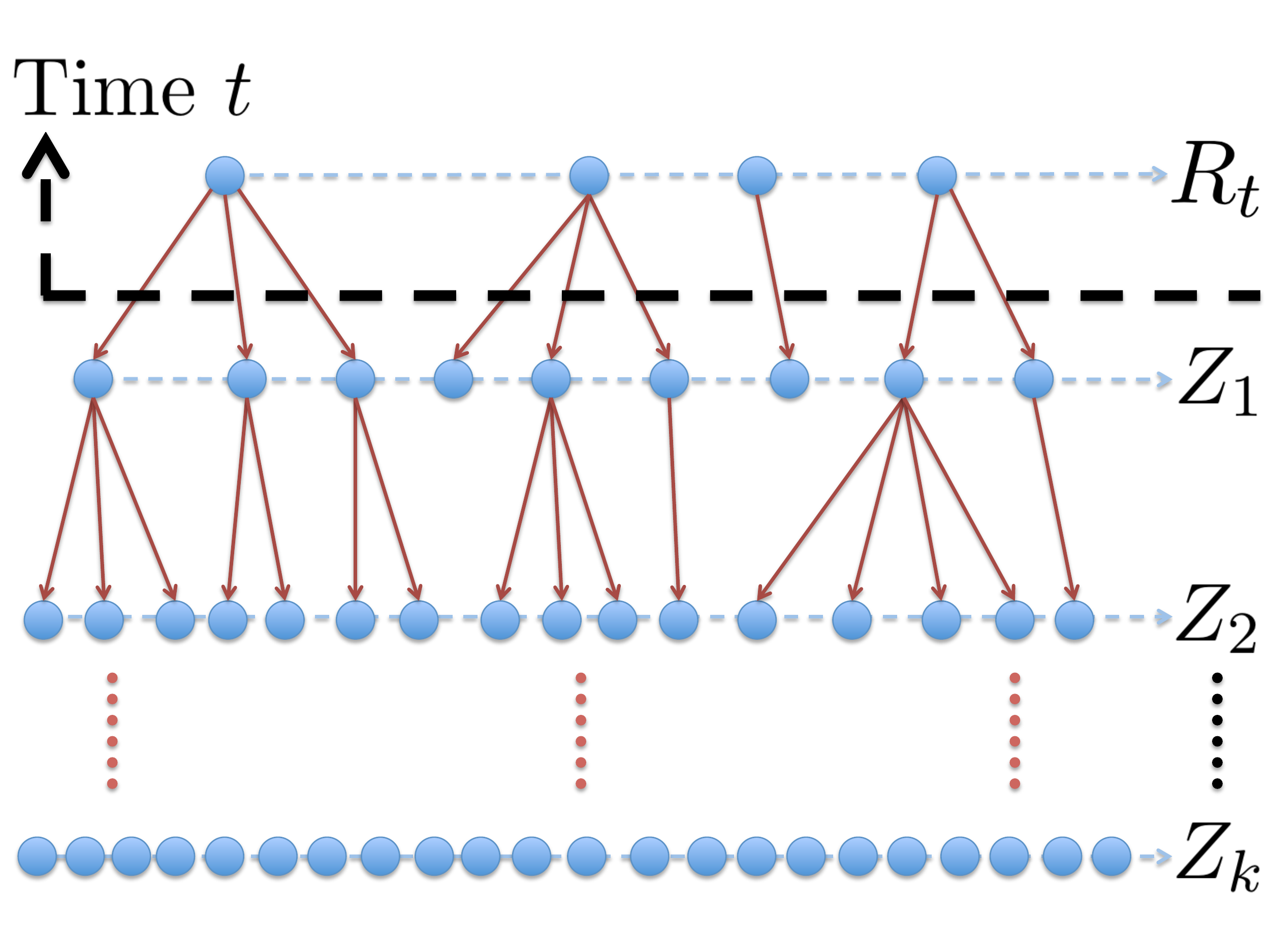}
\caption{\label{fig:descendants} An illustration of the information
  diffusion tree. We observe the cascade up to time $t$ (denoted by a dashed line)
  and the question is how the cascade tree is going to grow in the future.
  We define variables $Z_k$ which denote the number of reshares
  caused by the $k^\textit{th}$ generation descendants. Using variables $Z_k$
  the final reshare count $R_{\infty}$ can then be simply computer as $R_t + \sum\limits_{k = 1}^{\infty} Z_k$.}
\end{figure}

Given $Z_{1}$, the sequence of random variables $Z_{k}$ defines a
Galton-Watson tree with the \emph{offspring expectation} $\mu =
n_{*}p$ \cite{durrett2010probability}. Here, $\mu$ denotes the
expected number of reshares that the post gets.
Using a standard branching process result, we have $Z_{i}/\mu^i$ is a
martingale. Therefore, $\forall k>1$, $\E \left [Z_{k+1} | Z_k \right ]  = \mu \ Z_k$,
and,
\[
\E\left[\sum_{k=1}^\infty Z_k \middle| Z_1\right] = \frac{Z_1}{(1 - \mu)}= \frac{Z_1}{(1 - n_* p)}\, .
\]
Hence, we obtain
$$\E[R_\infty | \F_t] = R_t
+ \E\left[\sum_{k=1}^{\infty} Z_{k}\right] = R_t +
\frac{\E[Z_1]}{(1 - n_{*} p)},$$ which ends up being the right hand
side in Eq.~\eqref{eq:predict} because $\E[Z_1] = p(N_t - N^e_t)$ by
the definition of $Z_1$ and $N^e_t$.

Next, consider the case where $p = \p_t \geq 1/n_*$. In this regime,
the point process is supercritical and stays explosive. In terms of the
Galton-Watson tree discussed above, the offspring expectation $\mu =
n* p \ge 1$, so $\E[Z_{k+1}] \ge \E[Z_k] \ge \dotsb \ge
\E[Z_1]$. Therefore the total future reshares
$\sum_{k=1}^{\infty}Z_k$ has infinite expectation and the final
reshare count cannot be reliably predicted.
\end{proof}



Note that Prop.\ \ref{prop:expectation} assumes that the post
infectiousness remains constant in the future ($p_s = p_t$ for $s \ge
t$), which could be unrealistic for some information cascades. We correct this by changing the prediction formula in Eq.~\eqref{eq:predict} by adding two
scaling constants $\alpha_t, \gamma_t$ that adjust the final prediction:
\begin{equation}
  \label{eq:new-predict}
\hat{R}_{\infty}(t) = R_t + \alpha_t \frac{\hat{p}_t (N_t - N^e_t)}{1
  - \gamma_t \hat{p}_t n_{*}},~0<\alpha_t,\gamma_t<1 \, .
\end{equation}
We introduce these correction factors based on the
following intuition. We expect $\alpha_t$ to decrease over time $t$ so it scales
down the estimated infectiousness in the future, which accounts for
the post getting stale and outdated. Similarly, $\gamma_t$ accounts
for the overlap in the neighborhoods of reposters' followers. Over
time as the post spreads farther in the network, we expect $\gamma_t$
to increase as more nodes get exposed multiple times, which means the
arrival rate of new nodes (previously unexposed nodes) decreases over time.

We use the same values of $\alpha_t$ and $\gamma_t$ for all posts but allow them to vary over time.
The values of $\alpha_t$ and $\gamma_t$ are selected to minimized
median Absolute Percentage Error (refer to Section
\ref{sec:error-metrics} for definition) on a training data set. As
described in Section~\ref{sec:parameter-estimation}, we find
$\alpha_t$ is more important than $\gamma_t$ in
practice. 

\hide{
There are several reasons for using such model calibration procedure:
First, in decision theory and statistics, Stein's paradox shows that
  shrinking the estimated means $\hat{R}_{\infty}(t)$ is more accurate
  than using them directly, especially if we need to make a large number
  of predictions.
Second, we expect $p_t$ might decrease as time $t$ goes by (\eg, Figure
  \ref{fig:breakout}), so we may
  overestimate the size of first generation $\mathrm{E}[Z_{1}]$ and
  the offspring expectation $\mu$. To compensate for this, we discount them by
multiplying $\alpha_t$ and $\gamma_t$.
And third, the estimated memory kernel $\phi(s)$ and thus $N^e_t$
can be inaccurate. This error can also be corrected by introducing
tuning parameters.
}

\subsection{The SEISMIC algorithm}
\label{sec:algorithm}

Last, we put together all the components described so far and synthesize them in the \sem algorithm.
The \sem algorithm for predicting $\hat{R}_{\infty}(t)$ is described in Algorithm \ref{alg:2}, which uses the algorithm for computing $\hat{p}_t$ (Algorithm~\ref{alg:1}) as a subroutine.
These algorithms are based on Eqs. \eqref{eq:mle-kernel} and
\eqref{eq:new-predict}. We assume parameters
$K_t(s)$, $\alpha_t$, $\gamma_t$, $n_{*}$ are given \emph{a priori} or estimated from the data.

\begin{algorithm}[t]
  \caption{\sem: Predict final cascade size}
  \label{alg:2}
  \begin{algorithmic}
    \State  \textbf{Purpose:} {For a given post  at time $t$, predict its final reshare count}
    \State \textbf{Input:} {Post resharing information:} $t_i~\text{and}~n_i~\text{for}~i = 0, \dotsc,R_t$.

    \State \textbf{Algorithm:}
    \State
    $N_{t}$ = 0, $N_{t}^{e}$ = 0
    \For {$i =  0, \dotsc, R_t$}
    \State $N_{t}$ += $n_i$
    \State $N_{t}^{e}$ += $n_i\int_{t_i}^{t} \phi(s - t_i)ds$ \hfill(Sec.~\ref{sec:human-reaction-time})
    \EndFor
    \State $\hat{R}_{\infty}(t) = R_t + \alpha_t \hat{p}_t (N_t - N^e_t) / (1 - \gamma_t \hat{p}_t n_{*})$ \hfill(Alg.~\ref{alg:1})
    \State \textbf{Deliver:} $\hat{R}_{\infty}(t)$
  \end{algorithmic}
\end{algorithm}

\begin{algorithm}[t]
  \caption{Compute real-time infectiousness $\hat{p}(t)$}
  \label{alg:1}
  \begin{algorithmic}
    \State  \textbf{Purpose:} {For a given post $w$, calculate infectiousness $p_t$ with information about $w$ prior to time $t$}
    \State \textbf{Input:} {Post resharing information:} $t_i~\text{and}~n_i~\text{for}~i = 0, \dotsc, R_t$.

    \State \textbf{Algorithm:}
    \State
    $\tilde{R}_t$ = 0, $\tilde{N}_t^{e}$ = 0
    \For {$i =  0, \dotsc, R_t$}
    \State $\tilde{R}_t$ += $K_t(t - t_i)$
    \EndFor
    \For {$i =  0, \dotsc, R_t$}
    \State $\tilde{N}_t^{e}$ += $n_i\int_{t_i}^{t} K_t(t - s)\phi(s - t_i)ds$
    \hfill(Sec.~\ref{sec:estimation-infectiousness})
    \EndFor

\State
$p_t = \tilde{R}_t/\tilde{N}_t^{e}$
    \State \textbf{Deliver:} $p_t$
  \end{algorithmic}
\end{algorithm}

\xhdr{Computational complexity of \sem}
For any choice of $\phi(s)$ and $K_t(s)$, the computational cost of \sem is
$O(R_{t})$ for both calculating $\hat{p}_t$ and predicting $\hat{R}_{\infty}(t)$.
Of course, the actual computing time depends heavily on the
integration $\int_{t_i}^{t} K_t(t - s)\phi(s - t_i)ds$
and  $\int_{t_i}^{t} \phi(s - t_i)ds$. However, the overall computational cost of \sem is {\em linear} in the observed number of reshares $R_t$ of a given post by time $t$.

The linear time complexity is in part also due to the shape of our memory kernel. In Section~\ref{sec:parameter-estimation} we will estimate the memory kernel $\phi(s)$ for Twitter to have  the following form (for some $s_0 > 0$):
\begin{align}
  \label{eq:memory-kernel}
  \phi(s) =
  \begin{cases}
  c  &\text{if $0 < s \le s_0$}, \\
  c(s/s_0)^{-(1+\theta)} & \text{if $s > s_{0}$}.
  \end{cases}
\end{align}
This means that with the memory kernel $\phi(s)$ in Eq.~\eqref{eq:memory-kernel} and triangular weighting kernel $K_t(s)$ in Eq.~\eqref{eq:weighting-kernel}, both integrals can be evaluated in closed form because they are piece-wise polynomials (polynomial with possibly non-integer exponents), which greatly decreases computational cost of \sem.


\section{Experiments}
\label{sec:experiments}
In this section, we describe the Twitter data set, our parameter estimation procedure, and compare the performance of \sem to state-of-the-art approaches.

\subsection{Data description and data processing}
Our data is the complete set of over 3.2 billion tweets and retweets on Twitter from October 7
to November 7, 2011. For each retweet, the dataset includes tweet id,
posting time, retweet time, and the number of followers of the poster/retweeter.
Note, the data set lacks Twitter network information. The only piece of network information available to us is the number of followers of a node.

We focus on a subset of reasonably popular tweets with at least 50 retweets, so that our model enables the prediction as soon as sufficient number of retweets occur. Note, that if multiple Twitter users independently post the same tweet, which then gets retweeted, each original posting creates its own independent cascade. All in all there are 166,076 tweets
satisfying this criterion in the first 15 days. We form the training set
using the tweets from the first 7 days and the test set using the tweets from the next 8 days. We use the remaining 14 days for the retweet cascades to unfold and evolve.
For a particular retweet cascade, we obtain all the retweets posted within 14 days of the original post time, \ie, we approximate $R_{\infty}$ by $R_{\text{14 days}}$.
We estimate parameters $\phi(s), \alpha_t, \gamma_t
\text{ and } n_{*}$ with the training set, and evaluate the performance of
the estimator $\hat{R}_{\infty}$ on the test set. For the tweets in
our training set, $R_{\text{14 days}}$ has mean 209.8 and median
110. The temporal evolution of mean and median of $R_t$ are also shown in Figure \ref{fig:rtcum}.


\begin{figure}[t]
  \centering
  \includegraphics[width = \columnwidth]{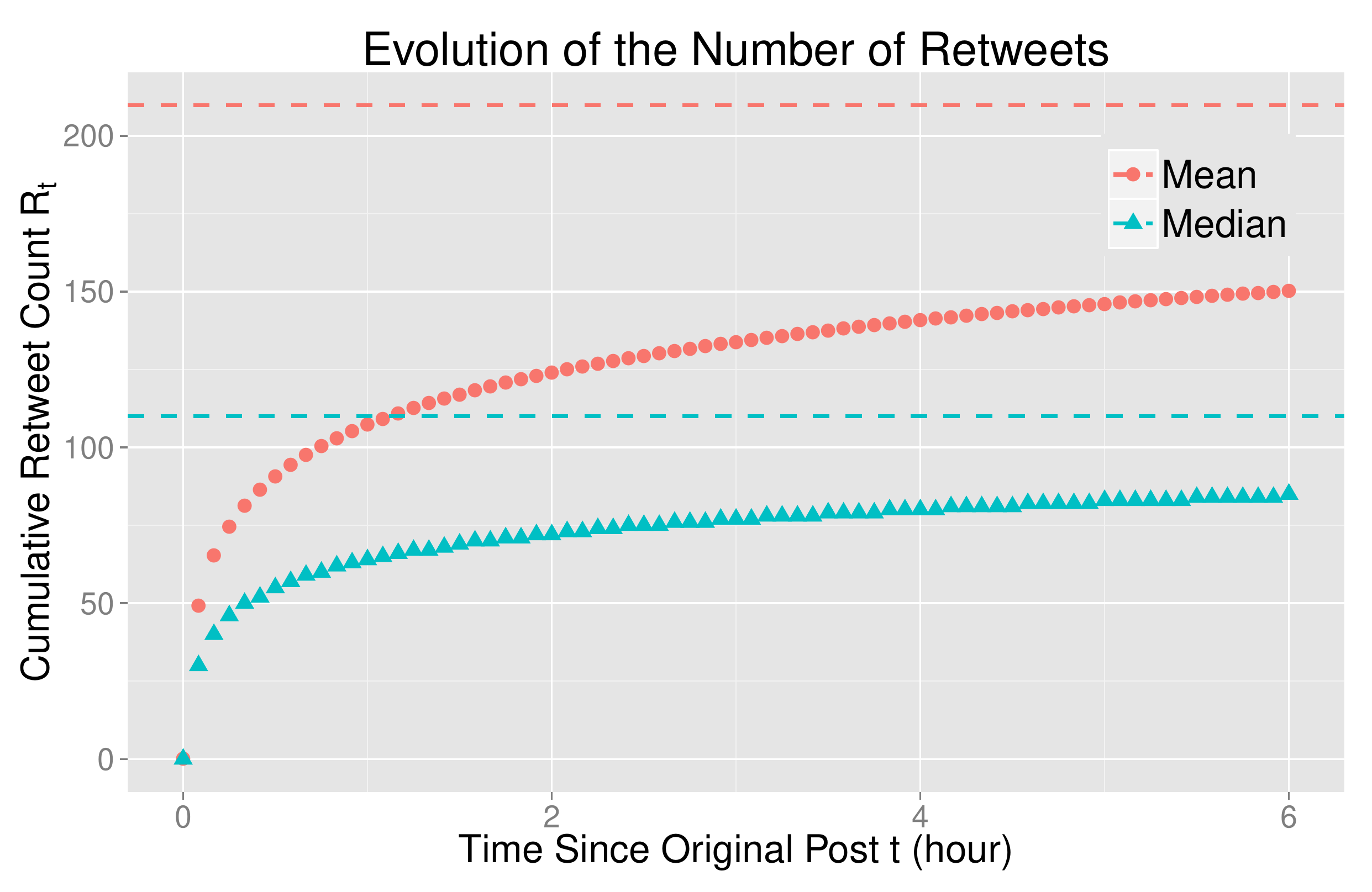}
  \caption{Convergence of the mean and media cumulative retweet count $R_{t}$
  as a function of time.The horizontal lines correspond to mean and median final
  retweet count $R_{14\,\text{days}}$. On average, a tweet receives 75\% of its retweets in the first 6 hours.}
  \label{fig:rtcum}
\end{figure}

\subsection{SEISMIC parameter estimation}
\label{sec:parameter-estimation}
First we describe how to fit the memory kernel $\phi(s)$
(Section~\ref{sec:human-reaction-time}). We
carefully choose 15 tweets in the training set and use the
distribution of all their retweet times as our $\phi(s)$ (Figure
\ref{fig:pdf}). The histograms of the 15 sequences of retweet times all
display a clear shape of subcritical decay. Moreover, all the original
posters have an overwhelming number of followers. Therefore, most of the
retweets, if not all, should come from the immediate followers of the
original poster. Consequently, the distribution of human reaction time
can be well approximated by that of the retweet times of these 15
tweets. The estimation of $\phi(s)$ can be further improved if the
network structure is available.

The observed reaction time distribution (Figure \ref{fig:pdf}) suggests a form of
Eq.~\eqref{eq:memory-kernel} for the memory kernel: constant in the
first 5 minutes, followed by a power-law decay. After setting the constant
period $s_0$ to 5 minutes, we estimate power law decay parameter
$\theta =$ 0.242 with the complimentary cumulative distribution function
(ccdf), and chose $c = 6.27 \times 10^{-4}$ to make
$\int_{0}^{\infty} \phi(s) ds = 1$. The memory kernel is a network wide
parameter and only needs to be estimated once. The fitted memory kernel is
plotted in Figure \ref{fig:pdf}.
\begin{figure}[t]
  \centering
  \includegraphics[width = \columnwidth]{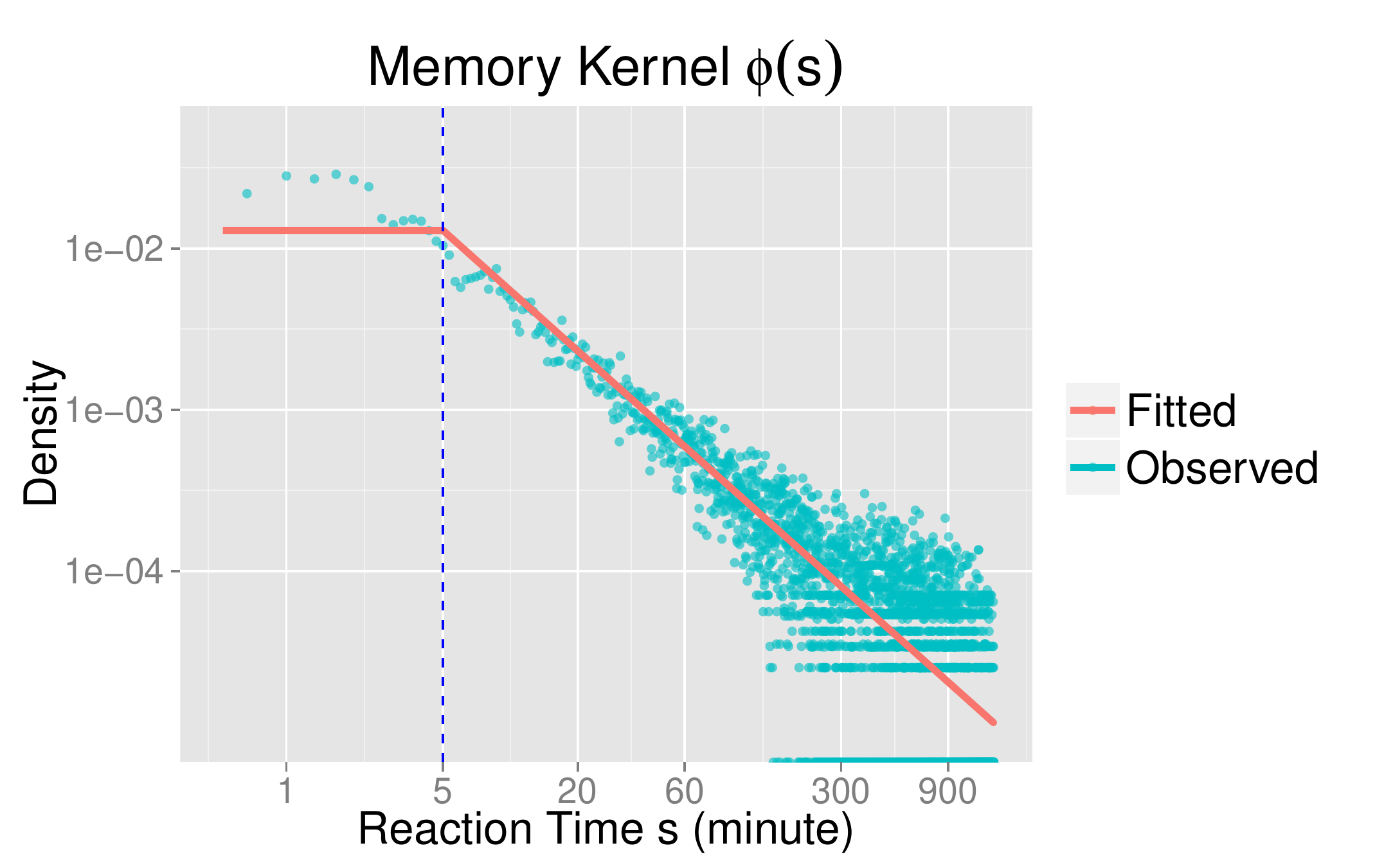}
  \caption{Reaction time distribution and the estimated memory kernel $\phi(s)$.
  The reaction time is plotted on logarithmic axes, hence the linear trend suggests
    a power law decay.}
  \label{fig:pdf}
\end{figure}

\hide{ 
With the simplified
assumptions on the contagiousness of the 15 selected
tweets, we run the risk of overestimating the tail of $\phi(s)$, which
results in an underestimation of the ``effective'' degree
$N^e_t$. We compensate for the overestimation of $\phi(t)$
by adding parameters $\alpha_t$ and $\gamma_t$ in the final
prediction (See Eq.~\eqref{eq:predict}). As discussed earlier in
Section~\ref{sec:predict-influence}, the $\alpha_t$ and $\gamma_t$
also help correct the bias from assuming constant infectiousness parameter.
} 



Last, we briefly comment on the correction factors $\alpha_t$ and $\gamma_t$ introduced in Eq.~\eqref{eq:predict}.
We use the same values of $\alpha_t$ and $\gamma_t$ for all tweets. Notice that $\gamma_t$ and $n_{*}$ only affect the predictions through their product $\gamma_t
n_*$. Overall, we find the value of $\gamma_t n_*$ has little effect on the
performance of our algorithm. In our experiments we simply set $\gamma_t n_* =$ 20 for all $t$.
We choose the value of
$\alpha_t$ such that it minimizes the training median Absolute Percentage Error (Section \ref{sec:error-metrics}). We report values of $\alpha_t$ in Table \ref{tab:alpha}. $\alpha_t$ has a particularly
small value at $t =$ 5 minutes, which may be a result of the overestimation
of $p_t$, when the triangular kernel has not moved away from the unstable beginning period.
After that $\alpha_t$ begins a slow and consistent decay to account for the fact that information is getting increasingly stale and outdated over time.

With all the estimated parameters in \sem, we are ready to apply
it (Algorithms \ref{alg:2} and
\ref{alg:1}) to the Twitter dataset. For a given tweet $w$ and every 5 minute interval $t$, we output our estimate $\hat{R}_\infty(t,w)$ of the tweet's final retweet count $R_\infty(w)$.


\begin{table}[t]
\centering
\begin{tabular}{c|rrrrr}
time (minute) & 5 & 10 & 15 & 20 & 30 \\
  \hline
  $\alpha$ & 0.389 & 0.803 & 0.772 & 0.709 & 0.680  \\
   \hline\hline
time (minute) & 60 & 120 & 180 & 240 & 360 \\
\hline
$\alpha$ & 0.562 & 0.454 &
  0.378 & 0.352 & 0.326 \\
\end{tabular}
\caption{Values $\alpha_t$ used in Algorithm \ref{alg:2}.}
\label{tab:alpha}
\end{table}

\subsection{Baselines for comparison}
\label{sec:baseline-estimators}

We consider four different prediction methods for comparison. The
first two are regression based and the next two are point process based.

\begin{itemize}[nolistsep]
\item{\bf Linear regression (LR) \cite{Szabo2010}:} The model can be defined as
$$\log R_{\infty} = \alpha_t + \log R_t +
  \epsilon,$$ where $\epsilon$ denotes the Gaussian noise. This is
  also the second baseline estimator used in \cite{Zaman2014}. 
  Notice that all the
  tweets receive the same multiplicative constant for a given time.
\item{\bf Linear regression with degree (LR-D) \cite{Szabo2010}:} This model can be written as
$$\log R_{\infty} = \alpha_t + \beta_{1,t} \log R_t +
  \beta_{2,t} \log N_t + \beta_{3,t} \log n_0 + \epsilon$$ where
  $\epsilon$ denotes, as before, the Gaussian noise. LR-D is
  more flexible than LR, since it allows $\log R_t$ to have a slope
  not equal to $1$ and uses additional features.
\item{\bf Dynamic Poisson Model (DPM) \cite{Agarwal2009, Crane2008}:}
It models the retweet times $\{t_k \}$
as a point process with rate
\[
\lambda_t = \lambda_{t_{\textrm{peak}}} (t -
  t_{\textrm{peak}})^{\gamma}
\]
where $t_{\textrm{peak}} = \arg\max_{s
  < t} \lambda_s$. The power-law parameter $\gamma$ is estimated
separately for each tweet. To discretize the model, we bin the retweet
times into $b=$ 5 minute intervals. Note that when $\gamma > -1$, the integral
$\int_{t_{\mathrm{peak}} + b}^{\infty} \lambda_t dt$ is
infinite. In such cases, we move $t_{\mathrm{peak}}$ forward to
the second maximum bin.
\item{\bf Reinforced Poisson Model (RPM) \cite{gao2015modeling}:} This
  recently published state-of-the-art approach models the reshare rate as
\[
\lambda_t = c f_{\gamma}(t)r_{\alpha}(R_t)
\]
where parameter $c$ measures the attractiveness of the message, $f_{\gamma}(t)
\propto t^{-\gamma} (\gamma>0)$ models the
aging effect, and $r_{\alpha}(R_t) (\alpha > 0)$ is the reinforcement
function which depicts the ``rich get richer'' phenomenon.
Given a particular tweet, the parameters $c, \gamma, \alpha$ are found by
maximizing the likelihood function, where the optimal values are projected to
their feasible sets whenever they are out of range.
\end{itemize}

\subsection{Evaluation metrics}
\label{sec:error-metrics}

For a particular tweet, suppose that the prediction for $R_{\infty}$ at
time $t$ is denoted by $\hat{R}_{\infty}(t)$. We use the following evaluation metrics in our experiment:
\begin{itemize}[nolistsep]
\item{\bf Absolute Percentage Error (APE):} For a given tweet $w$ and a prediction time $t$, the APE metric is defined as,
\eqn{
\text{APE}(w,t)=\frac{|\hat{R}_{\infty}(w,t) -
    R_{\infty}(w)|}{R_{\infty}(w)}.
}
     When the APE metric is used for evaluation purposes, various
     quantiles of APE over the tweets (all possible $w$) in the test
     dataset will be reported at each time $t$.
\item{\bf Kendall-$\tau$ Rank Correlation:} This is a measure of rank correlation
  \cite{kendall1938new}, which computes the
  correlation between the ranks of $\hat{R}_{\infty}(t)$ and
  $R_{\infty}$ for all test tweets. This metric is generally more robust than Pearson's
  correlation of values of $\hat{R}_{\infty}(t)$ and
  $R_{\infty}$.
  A high value of rank correlation means the predicted and the final retweet counts are strongly correlated.
\item{\bf Breakout Tweet Coverage:} We create a ground-truth list of top-$k$ tweets with the highest final retweet count. We refer to these tweets as ``breakout'' tweets.
  Using our model we can also produce a top-$k$ list based on the predicted final retweet count. We evaluate the methods by quantifying how well the predicted top-$k$ list covers the ground-truth top-$k$ list. We give additional details in Section~\ref{sec:identify-breakout-tweets}.
\end{itemize}



\subsection{Experimental results}
\label{sec:results}

In this section we evaluate the performance of \sem and the four competitors described in Section
\ref{sec:baseline-estimators}. All the methods start making predictions as
soon as a given tweet gets retweeted 50 times.

\subsubsection{SEISMIC model validation}
\label{sec:model-validation}

First, we empirically validate \sem. In Proposition
\ref{prop:expectation}, we obtain a formula for the expected number of final
retweets in terms of the infectiousness parameter $p_t$. Our goal here is to show that Proposition
\ref{prop:expectation} provides an unbiased estimate of the true final retweet count. We proceed as follows.
We use \sem to make a prediction after observing each tweet for 1 hour and then plot the prediction against the true final number of retweets. If \sem gives an unbiased estimate, then we expect a diagonal curve $y=x$, that is, the expected predicted $\hat{R}_{\infty}$ matches the true expected $R_\infty$.

Figure~\ref{fig:valid} shows that the empirical average almost perfectly coincides with \sem's
prediction. This suggests that the \sem estimator in Eq.~\eqref{eq:predict} is unbiased and
we can safely use it to predict the expected final number of retweets.
However, as mentioned earlier, in practice one often wants to shrink the
prediction in order to stabilize the estimator and achieve better
overall performance.
Therefore, we use the calibrated prediction formula
Eq.~\eqref{eq:new-predict} for the rest of the experiments.

\begin{figure}[t]
  \centering
  \includegraphics[width = \columnwidth]{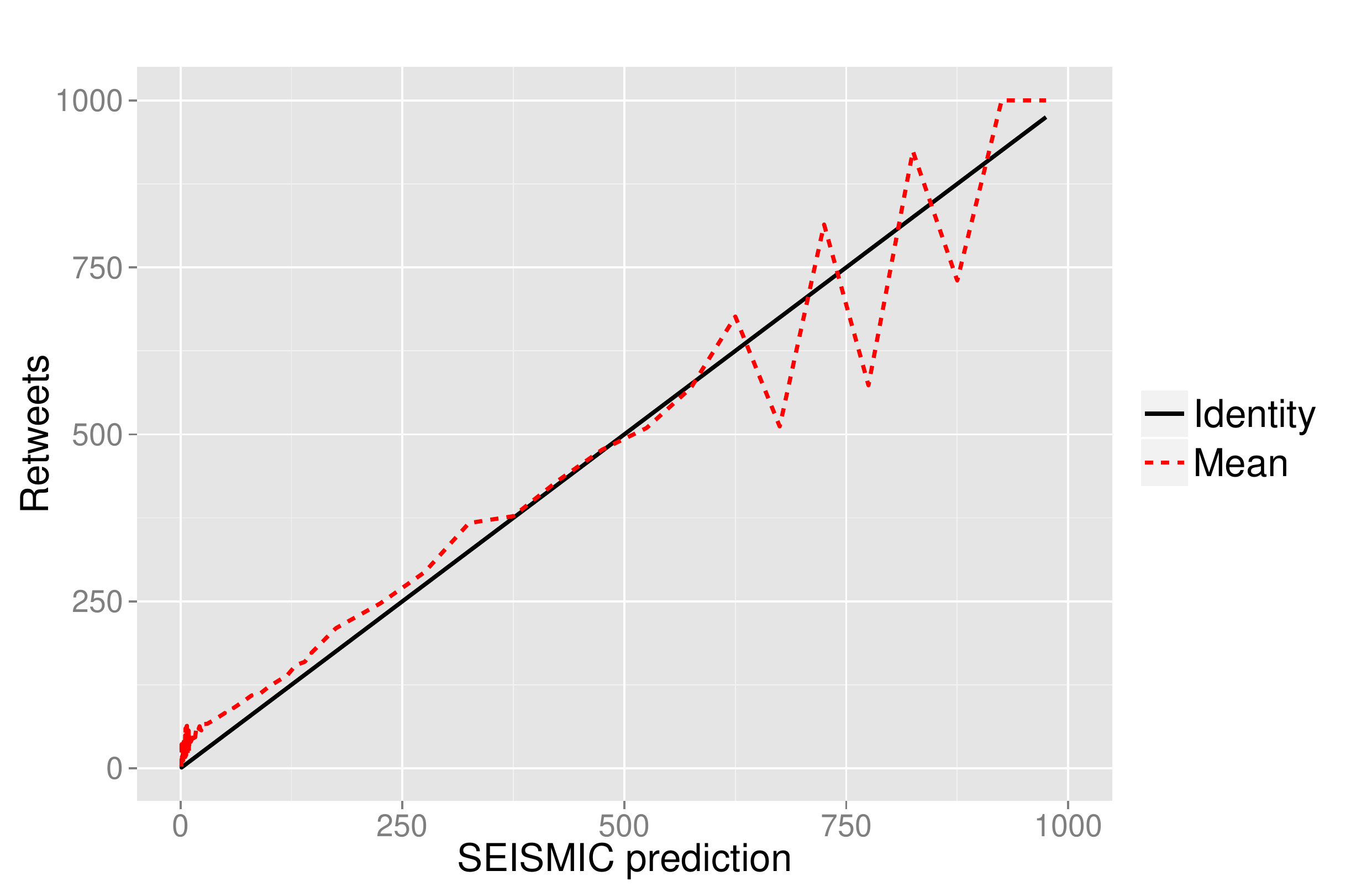}
  \caption{Predicted final retweet counts nicely follow the ground-truth retweet counts,
  which suggests \sem provides an unbiased estimate of the final
  retweet count. The dashed red curve is obtained by binning the
  tweets according to the prediction and then computing the average
  number of retweets in each bin.}
  \label{fig:valid}
\end{figure}

\subsubsection{Predicting final retweet count}

We run our \sem method for each tweet and compute the Absolute
Percentage Error (APE) as a function of time. We plot the quantiles of the distribution of APE of \sem in Figure~\ref{fig:ape-ours}.
After observing the cascade for 10 minutes ($t=$10 min), the 95th,
75th, and 50th percentiles of APE are less than 71\% , 44\%, and 25\%, respectively. This means that after 10 minutes, average error is less than 25\% for 50\% of the tweets and less than 71\% for 95\% of the tweets. After 1 hour the error gets even lower---APE for 95\%, 75\% and 50\% of the tweets drops to 62\%, 30\% and 15\%, respectively.

\begin{figure}[t]
  \centering
  \includegraphics[width = \columnwidth]{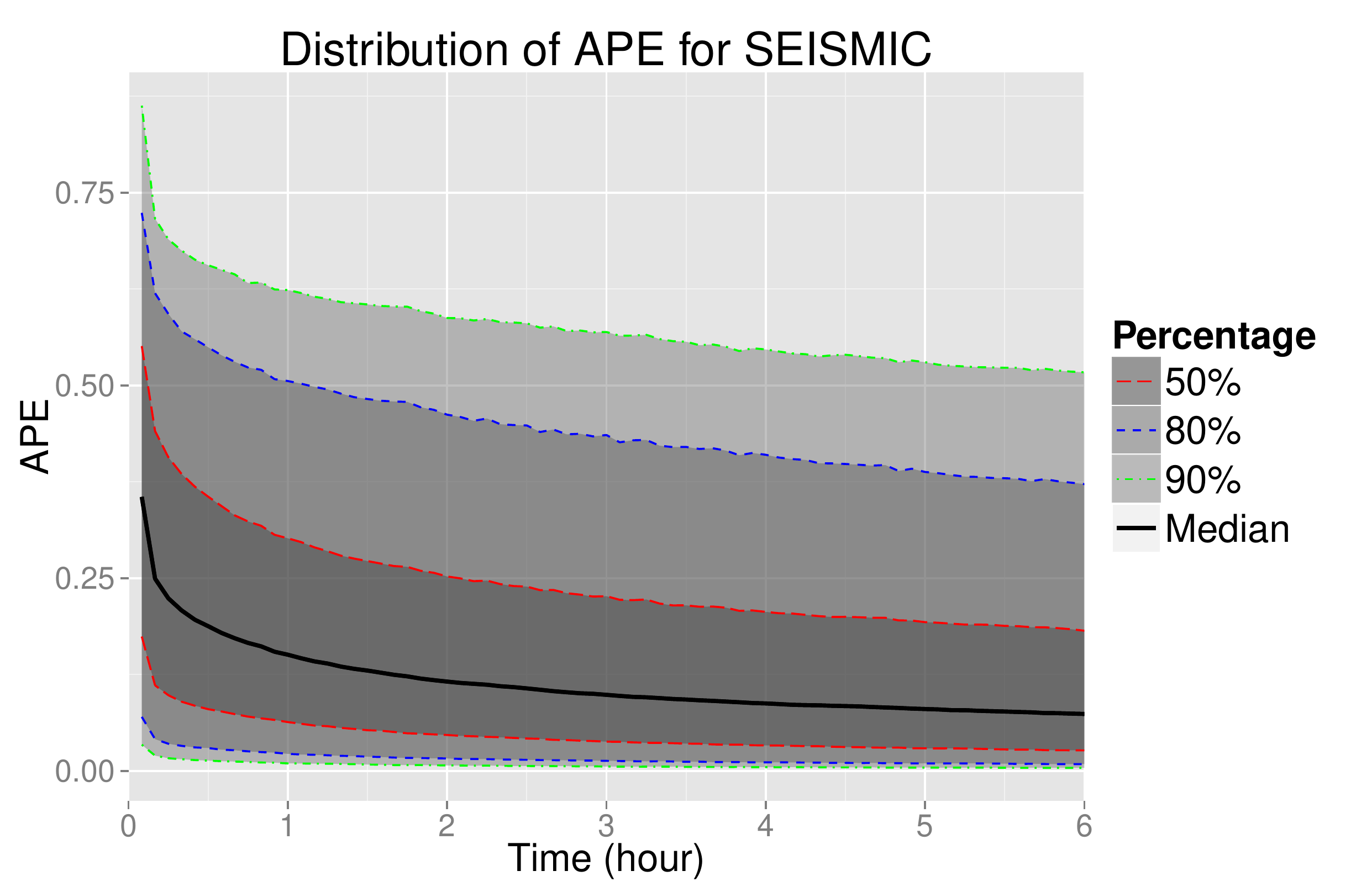}
  \caption{Absolute Percentage Error (APE) of \sem on
    the test set. We plot the median and the middle 50th, 80th, 90th
    percentiles of the distribution of APE across the tweets.}
  \label{fig:ape-ours}
\end{figure}

The proposed method, \sem, demonstrates a clear improvement over the
baselines and the state-of-the-art as
shown in Figures \ref{fig:med-cor} and \ref{fig:zoom-in}. The left panels of Figures
\ref{fig:med-cor} and \ref{fig:zoom-in} show the median APE of different methods over time as more and more of the retweet cascade gets revealed. The LR and LR-D baselines
have very similar performances, indicating the additional features used
by LR-D are not very informative. DPM performs poorly across the
entire tweet lifetime, while the other point process approach RPM is worse than LR
and LR-D in the early period but becomes better after about 2 hours.
All in all, in terms of median APE score \sem is about $30\%$ more accurate than all the competitors
across the entire twee lifetime.

Similarly, the right panels of Figures \ref{fig:med-cor} and \ref{fig:zoom-in}
show the Kendall-$\tau$ rank correlation between the predicted ranking
of top most retweeted tweets and the ground-truth ranking of
tweets. Again \sem is giving much more accurate rankings than other methods.


\begin{figure}[t]
  \centering
  \includegraphics[width = \columnwidth]{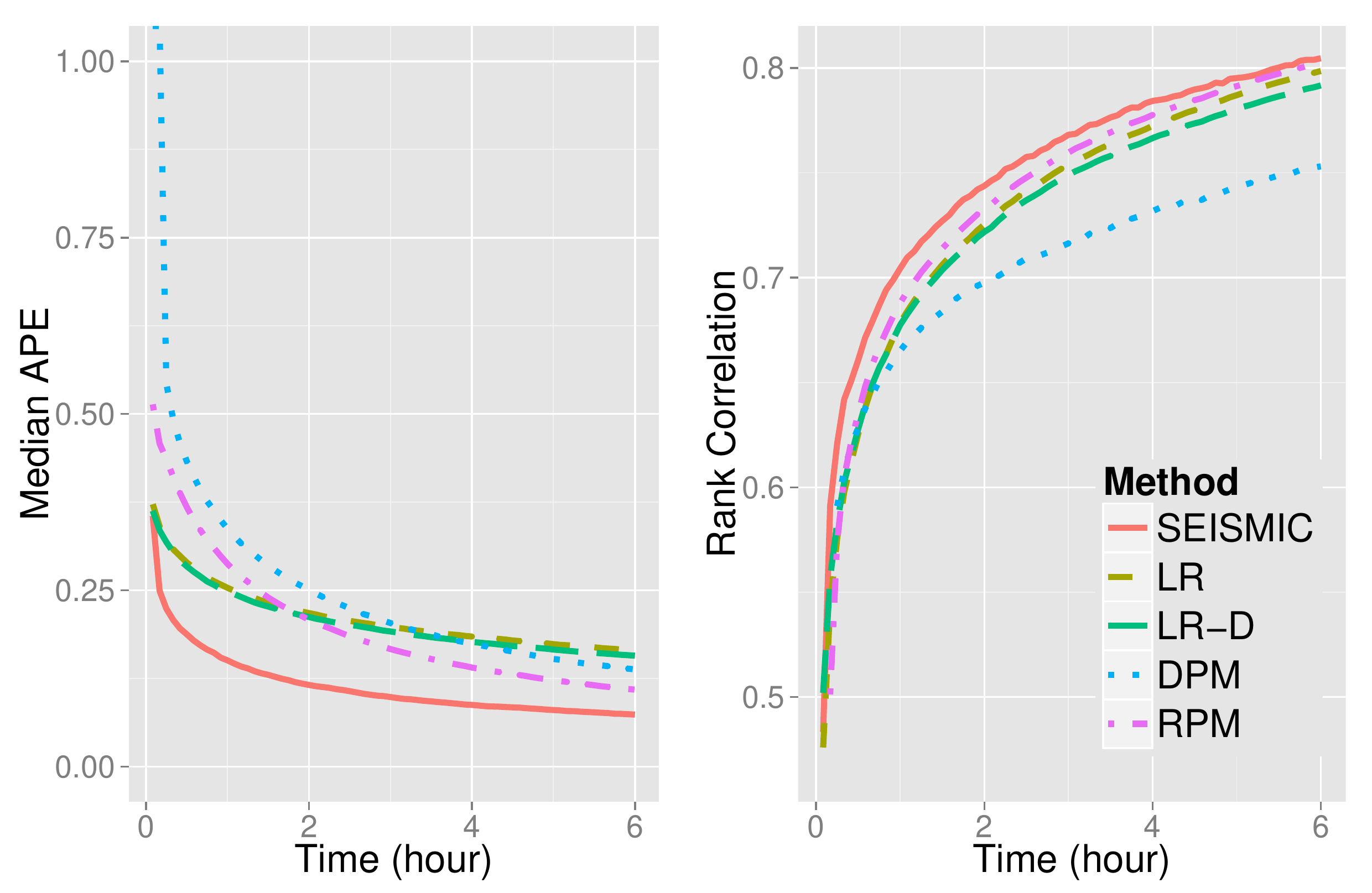}
  \caption{Median Absolute Percentage Error (APE) and Kendall's Rank Correlation
    of \sem and the baselines as a function of time. \sem consistently gives best performance.}
  \label{fig:med-cor}
\end{figure}

\begin{figure}[t]
  \centering
  \includegraphics[width = \columnwidth]{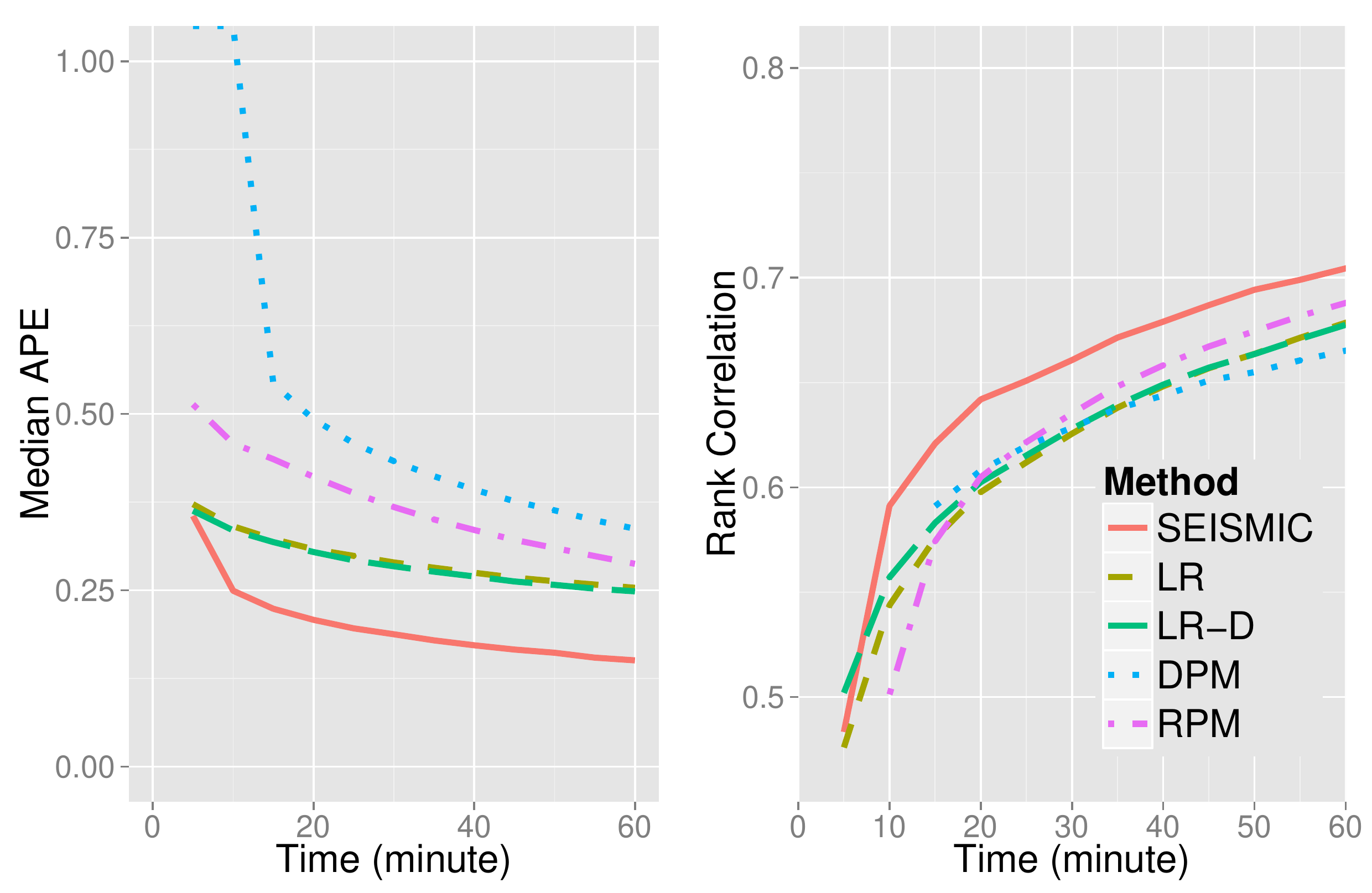}
  \caption{Zoom-in of Figure \ref{fig:med-cor}: Median APE and Rank
    Correlation for the first 60 minutes after the tweet was
    posted. \sem performs especially well compared to the baselines
    early in the tweet's lifetime.}
  \label{fig:zoom-in}
\end{figure}


\subsubsection{Identifying breakout tweets}
\label{sec:identify-breakout-tweets}


Can we identify a breakout tweet before it receives most of its
retweets? This question arises from various applications like trend
forecasting or rumor detection. The goal of this prediction task is to as early as possible identify ``breakout'' tweets, which have the highest final retweet count.
We quantify the performances of different models in detecting breakout tweets by using models' predictions of tweets' final retweet counts.

First, we form a ground-truth set $L_{M}^{*}$  of size $M$. The set
$L_{M}^{*}$ contains top-$M$ tweets with the highest final retweet
counts. Then with each of the prediction methods, we produce a
sequence of size $m$ lists, $\hat{L}_{m}(t)$. At each time $t$ the list $\hat{L}_{m}(t)$ contains the top-$m$ tweets with the highest predicted retweet counts at time $t$.

As described in Section~\ref{sec:error-metrics}, we then compare each $\hat{L}_m(t)$ with $L_M^{*}$, and calculate the {\em Breakout Tweet Coverage}, which is defined as the proportion of tweets in $L_M^{*}$ covered by $\hat{L}_m(t)$. 

Fig.~\ref{fig:rank} shows the performance of \sem in detecting top 100 most retweeted tweets ($L_{100}^{*}$) as a function of time.  \sem is able to cover 82 tweets in the
first 1 hour and 93 tweets in the first $6$ hours.

The fifth most retweeted tweet in this plot is actually the tweet we
showed earlier in Figure~\ref{fig:breakout}. We observe that \sem
detects this tweet 30 minutes after it has been posted, while LR and
LR-D both take more than an hour. DPM fails to detect this breakout for the first 6 hours (plots not show for brevity).


\begin{figure}[t]
  \centering
  \includegraphics[width = \columnwidth]{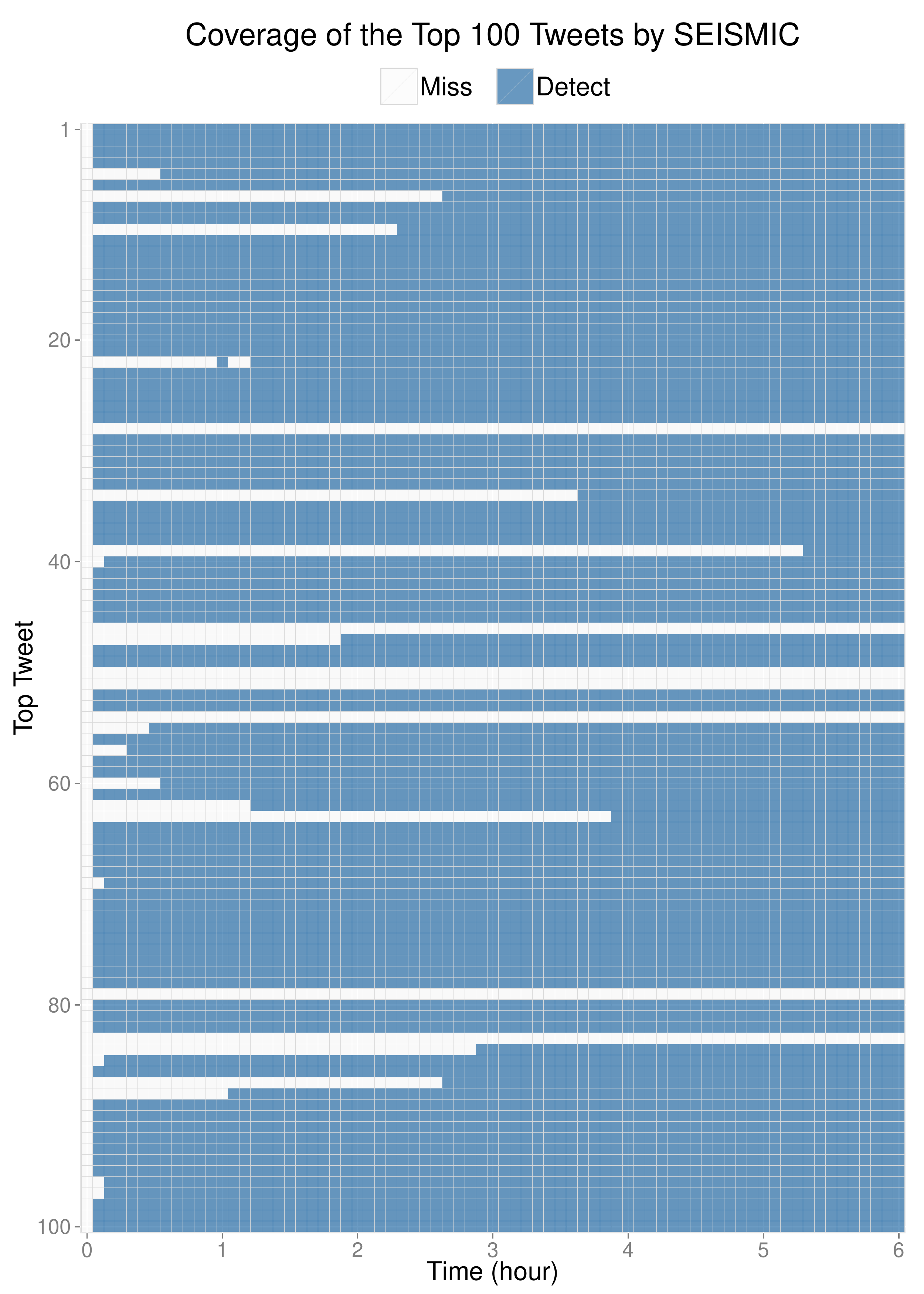}
  \caption{Coverage of top 100 most retweeted tweets.
    Each row represents a tweet.
    White blocks indicate that a given tweet was not covered by \sem's
    predicted list of top-500 tweets at time $t$, and blue indicates successful coverage.
    }
  \label{fig:rank}
\end{figure}

To compare \sem with other methods, we keep the size of the
predicted lists to be $m =$ 500, and use a larger target list
$L_{500}^{*}$, which is a more difficult task than finding $L_{100}^{*}$.
Figure \ref{fig:coverage} compares the coverage of
different methods against the proportion of retweets seen. After
seeing 20\% of the retweets, \sem covers 65\%
of the shortlist, while LR-D and LR both
cover only 50\%. In general, the dynamic Poisson model fails to
provide accurate predictions and breakout identifications.

Overall, \sem allows for effective detection of breakout tweets. For instance, after seeing around 25\% of the total number of retweets of a given tweet (in other words, after observing a tweet for around 5 minutes), \sem can identify 60\% of the top-100 tweets according to the final retweet counts.

\begin{figure}[t]
  \centering
  \includegraphics[width = \columnwidth]{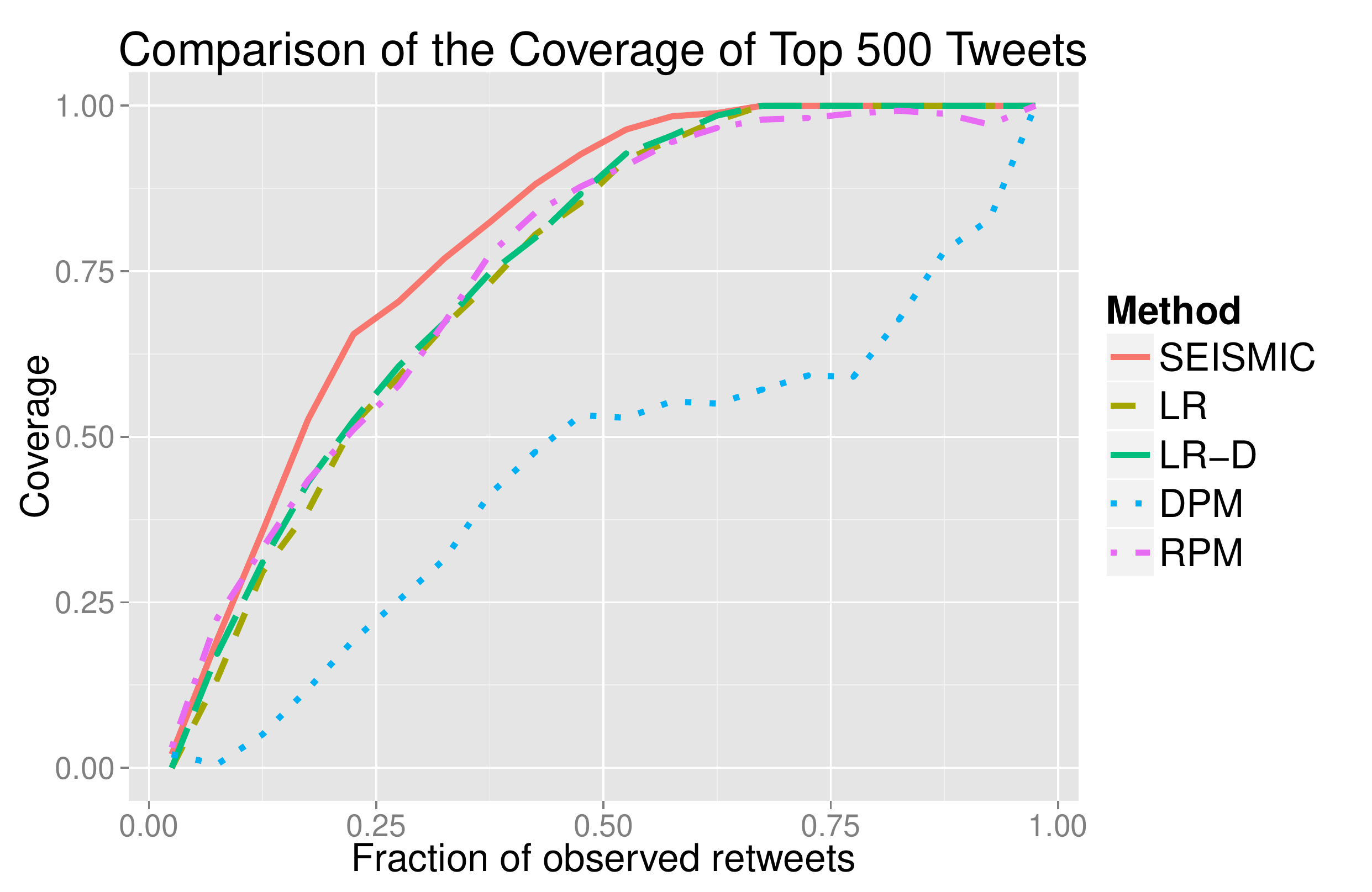}
  \caption{Coverage of top 500 tweets ($L_{500}^{*}$) by various methods.
  \sem exhibits clear improvement over all
  methods after about 10\% of retweets are observed.
  All methods except DPM achieve perfect coverage after 65\% of retweets
  are observed.}
  \label{fig:coverage}
\end{figure}

\subsection{Discussion of model robustness}
\label{sec:comp-with-state}

\sem demonstrates better robustness than the other two point process
based methods --- DPM and RPM. While \sem is not able to make a prediction
for tweets that are in the supercritical state, DPM and RPM are unable
to make predictions when the decay parameter is outside the feasible set ($\gamma < - 1$ for DPM and $\gamma < 0$ or
$\alpha < 0$ for RPM). For example, in
Figure~\ref{fig:breakout}, \sem characterizes the tweet as
supercritical for the first 70min, DPM fails to make a prediction for the first
6 hours and RPM is only able to make a prediction from 30 to 80 minute.

All in all, we find that tweets are in the supercritical regime for
only a very short time and \sem is able to make predictions for most
of the tweets in most of the time. We find that on average, \sem is not able to make a prediction for 1.80\% of the tweets after observing them for 15 minutes. In other words, after 15 minutes, 1.80\% of the tweets are still in the supercritical regime (over all the tweets with at least 50 retweets). This number drops to 1.29\% (0.67\%) after 1 hour (6 hours). As a point of comparison we also note that other methods are not able to make predictions for a much larger fraction of tweets: DPM fails to make a prediction for 6.77\%, 5.79\% and 1.45\%, and RPM fails for 3.45\%, 5.69\% and 15.43\% of the tweets after 15min, 1h, and 6h.



Our \sem method is also significantly faster than the RPM model
\cite{shen2014modeling}, which requires to solve a nonlinear
optimization problem every time it predicts. In our implementation, the average running
time per tweet for predicting at every 5 minutes for 6 hours is 0.02s
for \sem and 3.6s for RPM. The reported running time includes both parameter learning and prediction.





\section{Conclusion and Future Work}
\label{sec:discuss}



\label{sec:concluding-remarks}


In this paper we propose \sem, a flexible framework for modeling information cascades and predicting the final size of an information cascade. Our contributions are as follows:
\begin{itemize}[nolistsep]
        \item We model the information cascades as self-exciting point
        processes on Galton-Watson trees. Our approach provides a theoretical framework for
        explaining temporal patterns of information cascades.
        \item \sem is both scalable and accurate. The model requires no feature engineering and scales linearly with the number of observed reshares of a given post. This provides a way to predict information spread for millions of posts in an online real-time setting.
        \item \sem brings extra flexibility to estimation and prediction tasks as it requires minimal knowledge about the information cascade as well as the underlying network structure.
\end{itemize}

There are many interesting venues for
future work and our proposed model can be extended in many different
directions. For example, if the network structure is available, one could replace the node degree $n_i$ by the number of newly exposed followers. If content-based features or features of the original post are available, one could develop a content-based prior of $p_t$ for each post. If temporal features such as the users' time zone are available, one could directly use them to modify the estimator $\hat{p}_t$. In this sense, the proposed model provides an extensible framework for predicting information cascades.

\sem is a statistically sound and scalable bottom-up model of information cascades that allows for predicting final cascade size as the cascade unfolds over the network. We hope that our framework will prove useful for developing richer understanding of cascading behaviors in online networks, paving ways towards better management of shared content. 

\section*{Data and Software}
The \sem software and the dataset we use in Section
\ref{sec:experiments} can be found in
\url{http://snap.stanford.edu/seismic/}. The R package of our
algorithm is also available on \url{http://cran.r-project.org/web/packages/seismic}.

\section*{Acknowledgements}
The authors would like to thank David O. Siegmund for his constructive
suggestions and Austin Benson, Bhaswar B. Bhattacharya, Joshua Loftus
for their helpful comments. This research has been supported in part by NSF
IIS-1016909,              
CNS-1010921,              
IIS-1149837,              
IIS-1159679,              
ARO MURI,                 
DARPA SMISC, DARPA SIMPLEX,
Stanford Data Science Initiative,
Boeing,
Facebook,
Volkswagen,
and Yahoo.

\bibliographystyle{abbrv} 
\bibliography{ref}

\end{document}